\documentclass[journal,11pt,a4paper,onecolumn]{IEEEtran}
\usepackage{graphicx}
\usepackage{amssymb,amsbsy}
\usepackage{epstopdf}
\usepackage{amsmath,amsthm}
\usepackage{enumerate}
\usepackage{eufrak}
\usepackage{cite}
\usepackage{mathcomp}
\usepackage{supertabular}
\usepackage{longtable}
\usepackage{stmaryrd}
\usepackage{url}
\usepackage{color}
\usepackage{rotating}
\usepackage{float}

\usepackage{blkarray, multirow}
\def\Big#1{\makebox(0,-10){\huge{#1}}}

\usepackage{array}
\newcolumntype{C}[1]{>{\centering\arraybackslash\hspace{0pt}}p{#1}}

\usepackage{array}
\usepackage{mathtools}
\usepackage{multicol}
\usepackage{amsmath}
\usepackage{algorithm}
\usepackage{algorithmic}
\usepackage{amssymb}
\usepackage{xcolor}

\usepackage[all]{xy}
\entrymodifiers={++[o][F-]}



\interdisplaylinepenalty=2500


\usepackage{setspace}

\theoremstyle{definition}

\newtheorem{prop}{Proposition}[section]
\newtheorem{thm}[prop]{Theorem}
\newtheorem{cor}[prop]{Corollary}
\newtheorem{lem}[prop]{Lemma}

\newtheorem{exa}{Example}[section]

\newtheorem{rem}{Remark}

\begin{document}

\newcommand{\vA}{{\bf A}}
\newcommand{\vAtilde}{\widetilde{\bf A}}

\newcommand{\vB}{{\bf B}}
\newcommand{\vBtilde}{\widetilde{\bf B}}

\newcommand{\vC}{{\bf C}}
\newcommand{\vD}{{\bf D}}
\newcommand{\vG}{{\bf G}}
\newcommand{\vH}{{\bf H}}
\newcommand{\vI}{{\bf I}}

\newcommand{\vY}{{\bf Y}}
\newcommand{\vZ}{{\bf Z}}

\newcommand{\vJ}{{\bf J}}

\newcommand{\vM}{{\bf M}}
\newcommand{\vN}{{\bf N}}
\newcommand{\vU}{{\bf U}}
\newcommand{\vV}{{\bf V}}
\newcommand{\vT}{{\bf T}}
\newcommand{\vR}{{\bf R}}
\newcommand{\vS}{{\bf S}}

\newcommand{\va}{{\bf a}}
\newcommand{\vb}{{\bf b}}
\newcommand{\vc}{{\bf c}}

\newcommand{\ve}{{\bf e}}
\newcommand{\vh}{{\bf h}}
\newcommand{\vg}{{\bf g}}
\newcommand{\vp}{{\bf p}}

\newcommand{\vu}{{\bf u}}
\newcommand{\vv}{{\bf v}}
\newcommand{\vw}{{\bf w}}
\newcommand{\vx}{{\bf x}}
\newcommand{\vhx}{{\widehat{\bf x}}}
\newcommand{\vtx}{{\widetilde{\bf x}}}
\newcommand{\vy}{{\bf y}}
\newcommand{\vz}{{\bf z}}

\newcommand{\vj}{{\bf j}}
\newcommand{\vzero}{{\bf 0}}
\newcommand{\vone}{{\bf 1}}
\newcommand{\vbeta}{{\boldsymbol \beta}}
\newcommand{\vchi}{{\boldsymbol \chi}}

\newcommand{\tA}{\textrm A}
\newcommand{\tB}{\textrm B}
\newcommand{\A}{\mathcal A}
\newcommand{\B}{\mathcal B}
\newcommand{\C}{\mathcal C}
\newcommand{\D}{\mathcal D}
\newcommand{\E}{\mathcal E}
\newcommand{\F}{\mathcal F}
\newcommand{\G}{\mathcal G}
\newcommand{\M}{\mathcal M}
\newcommand{\HH}{\mathcal H}
\newcommand{\PP}{\mathcal P}

\newcommand{\Q}{\mathcal Q}
\newcommand{\Qb}{\bar{\mathcal Q}}
\newcommand{\Db}{{\bar{\Delta}}}

\newcommand{\pQ}{{\bf p}\mathcal Q}
\newcommand{\pQb}{{\bf p}\bar{\mathcal Q}}

\newcommand{\R}{\mathcal R}
\newcommand{\SSS}{\mathcal S}
\newcommand{\U}{\mathcal U}
\newcommand{\V}{\mathcal V}
\newcommand{\Y}{\mathcal Y}
\newcommand{\Z}{\mathcal Z}

\newcommand{\Pg}{{{\mathcal P}_{\rm gram}}}
\newcommand{\Pgint}{{{\mathcal P}^\circ_{\rm gram}}}
\newcommand{\Pgrc}{{{\mathcal P}_{\rm GRC}}}
\newcommand{\Pgrcint}{{{\mathcal P}^\circ_{\rm GRC}}}
\newcommand{\Pint}{{{\mathcal P}^\circ}}
\newcommand{\Ag}{{\bf A}_{\rm gram}}

\newcommand{\CC}{\mathbb C} 
\newcommand{\RR}{\mathbb R}
\newcommand{\ZZ}{\mathbb Z}
\newcommand{\FF}{\mathbb F}
\newcommand{\KK}{\mathbb K}

\newcommand{\Fnd}{\FF_q^{n^{\otimes d}}}
\newcommand{\Knd}{\KK^{n^{\otimes d}}}

\newcommand{\ceiling}[1]{\left\lceil{#1}\right\rceil}
\newcommand{\floor}[1]{\left\lfloor{#1}\right\rfloor}
\newcommand{\bbracket}[1]{\left\llbracket{#1}\right\rrbracket}

\newcommand{\inprod}[1]{\left\langle{#1}\right \rangle}


\newcommand{\beas}{\begin{eqnarray*}} 
\newcommand{\eeas}{\end{eqnarray*}} 

\newcommand{\bm}[1]{{\mbox{\boldmath $#1$}}} 

\newcommand{\wt}{{\rm wt}} 
\newcommand{\supp}{{\rm supp}} 
\newcommand{\dg}{d_{\rm gram}} 
\newcommand{\da}{d_{\rm asym}} 
\newcommand{\dist}{{\rm dist}} 
\newcommand{\ssyn}{s_{\rm syn}}
\newcommand{\sseq}{s_{\rm seq}}
\newcommand{\nullplus}{{\rm Null}_{>\vzero}}

\newcommand{\tworow}[2]{\genfrac{}{}{0pt}{}{#1}{#2}}
\newcommand{\qbinom}[2]{\left[ {#1}\atop{#2}\right]_q}

\newcommand{\Lovasz}{Lov\'{a}sz }
\newcommand{\etal}{\emph{et al.}}

\newcommand{\citereq}{{\color{blue} [citation required]}}
\newcommand{\todo}[1]{{\color{red} (TODO) #1}}

\title{Local Codes with Addition Based Repair 
\thanks{Part of this work was completed when Han Mao Kiah visited Singapore University of Technology and Design.}}
 \author{
   \IEEEauthorblockN{
     Han Mao Kiah\IEEEauthorrefmark{2},
     Son Hoang Dau\IEEEauthorrefmark{1},
     Wentu Song\IEEEauthorrefmark{3}, 
     Chau Yuen\IEEEauthorrefmark{4}
		} \\
   \IEEEauthorblockA{
	\IEEEauthorrefmark{2}School of Physical and Mathematical Sciences, Nanyang Technological University, Singapore \\   	
		\IEEEauthorrefmark{1}\IEEEauthorrefmark{3}\IEEEauthorrefmark{4}Singapore University of Technology and Design, Singapore\\ 
		Emails: 
		{\it\IEEEauthorrefmark{2}hmkiah}@ntu.edu.sg,
		$\{${\it\IEEEauthorrefmark{1}sonhoang\_dau, 
		\IEEEauthorrefmark{3}wentu\_song,
		\IEEEauthorrefmark{4}yuenchau}$\}$@sutd.edu.sg			
		}
 }
\IEEEoverridecommandlockouts

\maketitle

\begin{abstract}
We consider the complexities of repair algorithms for locally repairable codes
and propose a class of codes that repair single node failures using addition operations only,
or {\em codes with addition based repair}.
We construct two families of codes with addition based repair. 
The first family attains distance one less than the Singleton-like upper bound,
while the second family attains the Singleton-like upper bound.
\end{abstract}


\section{Introduction}

Motivated by practical implementations of erasure coding in large-scale distributed storage systems, 
{\em locality} was identified as an important metric of study independently 
by Gopalan \etal{} \cite{Gopalan.etal:2012}, Oggier and Datta \cite{Oggier.Datta:2011}, and Papailiopoulos and Dimakis \cite{Papailiopoulos.Dimakis:2012}. 
Generally, the locality of a node $i$ refers to the number of other nodes that needs to be accessed in order to recover node $i$.
We refer to the latter set of nodes as the {\em repair set} for $i$ and
observe that having small values of locality allows for the fast repair of single node failures.

This observation has impelled the study and construction of erasure codes with low locality and high minimum distance.
Extending the argument from the classical Singleton bound, Gopalan \etal{} \cite{Gopalan.etal:2012} established a relationship between the parameters: 
code length $n$, code dimension $k$, locality $r$ and minimum distance $d$ 
(see Theorem \ref{thm:singleton}).
Since then, families of optimal codes attaining this bound were constructed (see \cite{Silberstein.etal:2013,Tamo.etal:2013,Song.etal:2014,Ernvall.etal:2014,Tamo.Barg:2014}).

However, the complexity of the repair algorithms for these code families is often not explicitly discussed.
Nevertheless, we observe that the repair algorithms usually involve multiplication and division in some finite field -- 
operations that are costly to implement in software and hardware.
This cost is exacerbated when the underlying finite field is of a large order.
In the literature of coding for disk arrays, this computational complexity is reduced by replacing finite field arithmetic by simple bit-wise operations. 
For example, Blaum and Roth \cite{Blaum.Roth:1993} proposed a construction of array codes based on the ring of polynomials with binary coefficients. 
More recently, Hou \etal{} \cite{Hou.etal:2013} and Shum \etal{} \cite{Shum.etal:2014} 
proposed classes of regenerating codes that utilize only XOR operations and bit-wise shifts.

Following this line of study, we propose a simple and efficient method of repair: 
when a node fails, we compute the sum of the nodes in its repair set to obtain $S$, 
and replace the failed node by $-S$.
Hence, our repair algorithm uses {\em only addition} operations in a finite field and 
when the field is a binary extension field (as with most storage applications), it utilizes {\em only XOR} operations.
In this paper, we construct two families of linear codes with good locality properties 
that utilizes this algorithm to repair {\em all} its nodes.
The first family achieves {distance one less} than the value in \eqref{eq:singleton},
while the second family achieves exactly the distance, albeit under certain conditions.
These results hence indicate that the extra requirement of addition based repair may be achieved 
{\em without significant loss in erasure correcting capability}.

\begin{exa}\label{exa:motivating}%
Consider a linear code $\C$ of length twelve and dimension six defined over $\FF_{13}$.
Let the information symbols be  $x_1,x_2,\ldots,x_6$ and the generator matrix of $\C$ be
\[\vG=
\left(\begin{array}{cccc|cccc|cccc}
1 & 0 & 0 & 12 & 0 & 0 & 0 & 0 & 7 & 8 & 10 & 1 \\
0 & 1 & 0 & 12 & 0 & 0 & 0 & 0 & 8 & 2 & 5 & 11 \\
0 & 0 & 1 & 12 & 0 & 0 & 0 & 0 & 5 & 3 & 12 & 6 \\
0 & 0 & 0 & 0 & 1 & 0 & 0 & 12 & 1 & 6 & 2 & 4 \\
0 & 0 & 0 & 0 & 0 & 1 & 0 & 12 & 5 & 7 & 8 & 6 \\
0 & 0 & 0 & 0 & 0 & 0 & 1 & 12 & 7 & 11 & 9 & 12
\end{array}\right).
\]

We label the nodes with $1,2,\ldots, 12$ and
 consider the information stored in the last three nodes 9, 10, 11, 12:
\[\begin{array}{rr rrr rrrrrrrrrrrr}
c_{9}\, = &  7 x_{1} &+\, 8 x_{2} &+\, 5 x_{3} &+\, 1 x_{4} &+\, 5 x_{5} &+\, 7 x_{6} ,\\
c_{10}\, = &  8 x_{1} &+\, 2 x_{2} &+\, 3 x_{3} &+\, 6 x_{4} &+\, 7 x_{5} &+\, 11 x_{6} ,\\
c_{11}\, = &  10 x_{1} &+\, 5 x_{2} &+\, 12 x_{3} &+\, 2 x_{4} &+\, 8 x_{5} &+\, 9 x_{6} ,\\
c_{12}\, = &  1 x_{1} &+\, 11 x_{2} &+\, 6 x_{3} &+\, 4 x_{4} &+\, 6 x_{5} &+\, 12 x_{6} .
\end{array}\]

Observe that $c_9=-(c_{10}+c_{11}+c_{12})$.
Hence, when node $9$ fails, 
we repair the node by contacting nodes $10$, $11$ and $12$,
and compute $c_9$ via addition operations over $\FF_{13}$. 
We check that this code has distance six,
thus attaining the upper bound \eqref{eq:singleton}.

%
\end{exa}

The paper is organized as follows. We first describe our main results in Section \ref{sec:main} and 
provide the explicit construction of these code families in Section \ref{sec:proof}. 
In Section \ref{sec:comparison}, we compare our codes with known families of codes with good locality properties., 
namely, Pyramid codes \cite{Huang.etal:2012} and Tamo-Barg codes \cite{Tamo.Barg:2014}.
In particular, we provide an analysis on the repair complexities for these codes.

\section{Main Results}\label{sec:main}

Let $[n]\triangleq \{1,2,\ldots,n\}$.
Consider a linear $[n,k,d]$ code $\C$ over $\FF_q$ of length $n$ with $k$ {\em information} nodes, $n-k$ {\em parity check} nodes and distance $d$.
A node $i\in[n]$ has {\em locality} $r_i$ if there exists a {\em repair set} $J_i=\{j_1,j_2,\ldots,j_{r_i}\}\subseteq [n]\setminus\{i\}$ of size $r_i$ 
and a {\em repair function} $\phi_i:\FF_q^{r_i}\to\FF_q$ such that
$\phi_i(c_{j_1},c_{j_2},\ldots,c_{j_{r_i}})=c_i \mbox{ for all }\vc\in \C$.
The code $\C$ has {\em information locality} $r$ if all information nodes have locality at most $r$,
and has {\em all-symbol locality} $r$ if all nodes have locality at most $r$.


We are interested in repair functions that utilizes only addition operations. 
Formally, a code $\C$ has {\em addition based repair} if for each node $i\in[n]$, 
the repair function defined on the repair set $J_i=\{j_1,j_2,\ldots,j_{r_i}\}$ is given by 
\[\phi_i(c_{j_1},c_{j_2},\ldots,c_{j_{r_i}})=-(c_{j_1}+c_{j_2}+\cdots+c_{j_{r_i}}) .\]


Hence, our objective is to construct a code with addition based repair and good locality properties, 
while maximizing its minimum distance. 
To benchmark the performance of our codes, 
we use the following Singleton-like upper bound on the distance of codes with information locality $r$.


\begin{thm}[{\cite[Th. 5]{Gopalan.etal:2012}}]\label{thm:singleton}
Consider an $[n,k,d]$ linear code with information locality $r$. Then
\begin{equation}\label{eq:singleton}
d\le n-\ceiling{\frac kr}-k+2.
\end{equation} 
\end{thm}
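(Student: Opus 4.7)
The plan is a Singleton-style argument: I will exhibit a subset $S \subseteq [n]$ satisfying $|S| \ge k + \ceiling{k/r} - 2$ and $\rank(\vG|_S) \le k - 1$, where $\vG$ is any generator matrix of $\C$ and $\vG|_S$ denotes its submatrix on the columns indexed by $S$. Given such an $S$, the column span of $\vG|_S$ is a proper subspace of $\FF_q^k$, so some non-zero row vector $\vu \in \FF_q^k$ satisfies $\vu \vG|_S = \vzero$. The codeword $\vu\vG$ is then non-zero (as $\vG$ has full row rank $k$) and vanishes on $S$, yielding $d \le n - |S| \le n - \ceiling{k/r} - k + 2$, which is the desired bound.

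I build $S$ greedily in two phases. In Phase I, I repeatedly pick an information coordinate $i \notin S$ whose column in $\vG$ is linearly independent from the columns currently in $\vG|_S$, and then adjoin the \emph{entire} set $\{i\} \cup J_i$ to $S$. The pivotal observation is that, because $c_i$ lies in the span of $\{c_j : j \in J_i\}$, adjoining $\{i\} \cup J_i$ increases $|S|$ by strictly more than it increases $\rank(\vG|_S)$: the index $i$ is genuinely new (since $i \notin J_i$ and $i \notin S$), yet it contributes no additional rank beyond what $J_i$ already contributes. Hence after $\ell$ rounds one has $|S| - \rank(\vG|_S) \ge \ell$, while the trivial bound $\rank(\vG|_S) \le \ell r$ continues to hold. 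I run Phase I for exactly $\ell_0 = \ceiling{k/r} - 1$ rounds; feasibility of each round is guaranteed by $\ell r < k$, which ensures the existence of an information coordinate outside the current span.

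In Phase II, I add individual information coordinates not yet in $S$ whose columns extend the current span, each such addition increasing both $|S|$ and $\rank(\vG|_S)$ by exactly one, continuing until the rank reaches $k - 1$. Combining the two phases gives $|S| \ge \rank(\vG|_S) + \ell_0 = (k - 1) + (\ceiling{k/r} - 1) = k + \ceiling{k/r} - 2$, as required. The most delicate step will be the gap-growth inequality in Phase I, which rests on the two disjointness facts $i \notin J_i$ and $i \notin S$ together with the linear dependency $c_i \in \mathrm{span}\{c_j : j \in J_i\}$; once these are carefully verified, the rest reduces to straightforward counting. A quick sanity check: when $r = k$ the algorithm skips Phase I and adds $k-1$ individual symbols, recovering the classical Singleton bound $d \le n - k + 1$; when $r = 1$, each Phase I round contributes two coordinates and one rank, producing $|S| = 2(k-1)$ and the correct bound $d \le n - 2k + 2$.
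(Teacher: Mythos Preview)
The paper does not supply its own proof of this statement: Theorem~\ref{thm:singleton} is quoted verbatim from \cite[Th.~5]{Gopalan.etal:2012} and used as a benchmark, with no argument given. There is therefore nothing in the paper to compare against.

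Your argument is correct and is, in fact, essentially the original proof of Gopalan \etal{}: the greedy two-phase construction of a set $S$ with $|S|\ge k+\ceiling{k/r}-2$ and $\rank(\vG|_S)\le k-1$, followed by the standard Singleton conclusion. A couple of small points worth making explicit in a final write-up: (i) the paper's definition of locality allows an arbitrary repair function $\phi_i$, not an a~priori linear one, so you should remark that for a linear code the determinacy of $c_i$ from $(c_j)_{j\in J_i}$ forces column $i$ of $\vG$ into the span of the columns indexed by $J_i$ (your ``$c_i\in\mathrm{span}\{c_j:j\in J_i\}$'' step uses this); and (ii) the feasibility of each Phase~I round needs the observation that the $k$ information columns of $\vG$ themselves span $\FF_q^k$, which is exactly what ``information nodes'' means. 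Both are easy, but they are the only places a grader could object.
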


Observe that the above bound applies to linear codes with all-symbol locality $r$ and 
in the same paper, Gopalan \etal{} provided a structure theorem for these codes \cite{Gopalan.etal:2012}.
In particular, they demonstrated that under certain conditions, codes with all-symbol locality $r$
attaining the upper bound \eqref{eq:singleton} were not possible.
Song \etal{} then further extended this result \cite{Song.etal:2014}.

\begin{thm}[{\cite[Cor. 10]{Gopalan.etal:2012},\cite[Th. 10]{Song.etal:2014}}]\label{thm:structure}
Suppose $0< r<k\le n$, $r|k$ and $r+1\nmid n$. Then there exists no $[n,k,d]$ linear code with all-symbol locality $r$
with $d=n-k-k/r+2$.
\end{thm}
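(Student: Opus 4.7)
The plan is to argue by contradiction by leveraging the structure theorem for codes attaining the Singleton-like bound of Theorem \ref{thm:singleton}. Suppose $\C$ is an $[n,k,d]$ linear code over $\FF_q$ with all-symbol locality $r$, with $r\mid k$ and $d=n-k-k/r+2$. The goal is to deduce that $[n]$ admits a partition into blocks of size exactly $r+1$, contradicting $r+1 \nmid n$.

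First, I would record the local parity checks. For every node $i \in [n]$, the repair relation $c_i=\phi_i(c_{j_1},\ldots,c_{j_{r_i}})$ with $r_i\le r$ yields a codeword in $\C^\perp$ supported on $\{i\}\cup J_i$ of Hamming weight at most $r+1$. Hence each coordinate lies in the support of some ``short'' dual codeword. I will call such a support a \emph{local group} of $i$.

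Next, I would revisit the proof of the Singleton-like bound to extract the structure forced by equality. The standard argument greedily builds a set $S\subseteq[n]$ by repeatedly adjoining a local group that intersects $[n]\setminus S$. After each step one gains at most $r+1$ new coordinates while ``using up'' at most $r$ independent message symbols. Equality in \eqref{eq:singleton} with $r\mid k$ forces each of the first $k/r$ steps to adjoin a local group of size exactly $r+1$, disjoint from the previously adjoined groups, contributing exactly $r$ new message dimensions. This produces $k/r$ pairwise disjoint local groups of size $r+1$ whose union $S$ satisfies $|S|=k+k/r$. For any remaining coordinate $i\in[n]\setminus S$, the same tightness analysis shows that the local group containing $i$ must also have size exactly $r+1$ and be disjoint from $S$ and from any previously extracted group outside $S$: if any overlap occurred, continuing the greedy process would strictly improve the bound on the number of recoverable message symbols and violate the equality $d=n-k-k/r+2$. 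This step, verifying that the greedy process never produces overlapping local groups once $r\mid k$, is the most delicate part of the argument and is where the Song \etal{} refinement of \cite[Cor.~10]{Gopalan.etal:2012} really bites.

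Iterating this extension yields a partition of $[n]$ into disjoint local groups, each of size exactly $r+1$. Consequently $(r+1)\mid n$, contradicting the hypothesis $r+1\nmid n$ and completing the proof.
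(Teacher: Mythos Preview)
The paper does not supply its own proof of Theorem~\ref{thm:structure}; the result is quoted directly from \cite[Cor.~10]{Gopalan.etal:2012} and \cite[Th.~10]{Song.etal:2014} and used as a black box. There is therefore nothing in this paper to compare your argument against.

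That said, your sketch follows exactly the route taken in the cited references: extract a short dual codeword covering each coordinate, rerun the greedy construction underlying the Singleton-like bound, and observe that tightness together with $r\mid k$ forces the local groups to be pairwise disjoint and of size exactly $r+1$, yielding $(r+1)\mid n$. You correctly flag the delicate step --- showing that at equality no local group can overlap a previously chosen one --- and you are right that this is where the work lies; your write-up gestures at it rather than proving it. In Gopalan \etal{} this is handled by tracking the rank increments at each greedy step and showing that any overlap or any group of size $<r+1$ would give a strict improvement in the bound, while Song \etal{} sharpen the argument to cover the coordinates outside the first $k/r$ groups. If you intend this as a self-contained proof, that step needs to be written out; as a proof plan it is accurate.
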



We present our key results in Theorem \ref{thm:main1} and \ref{thm:main2}, 
and defer their proofs to Section \ref{sec:proof}.
Our first construction provides a family of linear codes with addition based repair, 
information locality $r$ and attains distance one less than the value in \eqref{eq:singleton}.

\begin{thm}[Construction I]\label{thm:main1}
Fix $n<q$, $0< r<k\le n$ and $r|k$. Let $t=n-k-k/r$.
Then there exists an $[n,k,t+1]$ linear code over $\FF_q$ with addition based repair
and information locality $r$,
where $k/r$ parity symbols have locality $r$, and 
$t$ remaining parity symbols have locality $\min\{t-1,k\}$.
\end{thm}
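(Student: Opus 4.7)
The plan is to assemble $\C$ in three steps---a local structure, an MDS backbone with a distinguished all-$(-1)$ parity column, and an appended capstone node---and then deduce the distance bound from a pyramid-code reduction.

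First, I partition $[k]$ into $k/r$ groups $G_1,\ldots,G_{k/r}$ of size $r$ and introduce local parities $p_l:=-\sum_{i\in G_l} x_i$; the identity $p_l+\sum_{i\in G_l} x_i=0$ yields addition-based repair with locality $r$ for every node in $G_l\cup\{p_l\}$. Using $n<q$, I then build a systematic $[k+t,k,t+1]$ MDS code $\C_0$ with generator matrix $[\,I_k\mid-\vone\mid C\,]$, where $C$ is a $k\times(t-1)$ Cauchy matrix over $\FF_q$ (which fits since $k+t-1<n<q$). Subtracting the first row from the others, every square submatrix of $[-\vone\mid C]$ reduces to a scalar multiple of a Cauchy submatrix and is therefore nonsingular, so $\C_0$ is MDS. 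Its parities are $g_0:=-\sum_i x_i$ and $g_1,\ldots,g_{t-1}$.

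Next I form $\C_1$ from $\C_0$ by replacing $g_0$ with the $k/r$ local parities $p_l$ (so $g_0=\sum_l p_l$ in the old variables), and then form $\C$ by appending $g_t:=-\sum_{j=1}^{t-1} g_j$; the length is $k+k/r+t=n$. The appended relation $\sum_{j=1}^{t} g_j=0$ immediately gives each $g_j$ addition-based repair from the other $t-1$ globals. When $t-1>k$, combining $\sum_j g_j=0$ with the local identity $\sum_l p_l+\sum_i x_i=0$ instead exhibits an alternative addition-based repair set of size $k$ made from info and local-parity nodes, yielding the desired locality $\min\{t-1,k\}$.

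For the distance I use a pyramid argument. Define $\phi:\C_1\to\C_0$ by collapsing $(p_1,\ldots,p_{k/r})$ into the single coordinate $\sum_l p_l$. A brief case split on whether $g_0=\sum_l p_l$ is zero shows that the number of nonzero $p_l$'s in $c$ is always at least the indicator of $g_0\ne 0$ in $\phi(c)$: if $g_0\ne0$ at least one $p_l$ is nonzero, while if $g_0=0$ either all $p_l$ vanish or at least two cancel. Hence $\wt(c)\ge\wt(\phi(c))\ge t+1$ for every nonzero $c\in\C_1$, and since appending $g_t$ cannot decrease the minimum weight, $d(\C)\ge t+1$. The main obstacle is Step~2, producing an MDS code with the prescribed all-$(-1)$ parity column; the Cauchy construction with an adjoined all-$(-1)$ column resolves this, after which the pyramid distance bound and addition-based repair checks are routine.
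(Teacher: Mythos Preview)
Your argument is essentially correct, but it follows a genuinely different route from the paper.

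\textbf{The paper's approach.} The paper writes down the generator matrix with block-diagonal $(\vI_r\mid -\vj_r)$ blocks and unknown global-parity entries $g_j^{(i)}$, then identifies each row with a polynomial in $\FF_q[x]/(x^{q-1}-1)$. It forces every row polynomial to vanish at $1,\omega,\ldots,\omega^{t-1}$; this makes the code a subcode of a cyclic code with those zeros, and the BCH bound immediately yields $d\ge t+1$. The root at $x=1$ simultaneously forces $\sum_j g_j^{(i)}=0$, which is exactly the addition-based repair relation for the global parities. The unknowns $g_j^{(i)}$ are then found by solving $k$ Vandermonde systems.

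\textbf{Your approach.} You instead build a systematic $[k+t,k,t+1]$ MDS code with a distinguished all-$(-1)$ parity column via a bordered Cauchy matrix, split that column into $k/r$ local parities \`a la pyramid codes, and append one capstone global parity to force the globals to sum to zero. The distance bound comes from the pyramid ``collapse'' map $\phi$ back to the MDS code. This is more combinatorial: no polynomial machinery, and the field-size requirement $k+t-1<q$ is transparent. It also explains cleanly why the naive pyramid variant in the paper's Remark~1(i) fails---there the initial MDS code is arbitrary, whereas you engineer the all-$(-1)$ column so that the collapse map is weight-nonincreasing.

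\textbf{One small gap.} Your sentence about the case $t-1>k$ is not right: combining $\sum_j g_j=0$ with $\sum_l p_l+\sum_i x_i=0$ does \emph{not} produce an addition-based repair set of size $k$ for a global parity $g_j$ (the Cauchy entries are not $\pm 1$). Fortunately nothing of the sort is needed. The theorem only asserts that the code has addition-based repair (which you get from the group relations of sizes $r$ and $t-1$) and, separately, that the globals have locality $\min\{t-1,k\}$. The locality-$k$ part is immediate from systematicity: each $g_j$ is a linear function of the $k$ information symbols. Just say that, and the argument is complete.
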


Observe that in the case where $t\le r+1$, the code given by Theorem \ref{thm:main1} has all-symbol locality $r$.
Furthermore, when $t<r+1$, we have that $r+1\nmid n$ and so, Theorem \ref{thm:structure} applies and yields the following corollary.

\begin{cor}\label{cor:main}
Fix $n<q$, $0< r<k\le n$ and $r|k$. Suppose that $t=n-k-k/r\le r+1$.
There exists an $[n,k,t+1]$ linear code over $\FF_q$ with addition based repair
and all-symbol locality $r$.
Furthermore, if $t<r+1$, then the code is optimal in terms of distance.
\end{cor}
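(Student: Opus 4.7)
The plan is to derive the corollary directly from Theorem \ref{thm:main1} and Theorem \ref{thm:structure}, with the hypothesis $t \le r+1$ used to upgrade information locality to all-symbol locality, and $t < r+1$ used to invoke the structural impossibility result.

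First, I would invoke Theorem \ref{thm:main1} to obtain an $[n,k,t+1]$ linear code $\C$ over $\FF_q$ with addition based repair, whose $k$ information nodes have locality at most $r$, whose $k/r$ designated parity symbols have locality exactly $r$, and whose remaining $t$ parity symbols have locality $\min\{t-1,k\}$. To upgrade this to all-symbol locality $r$, I would observe that the hypothesis $t\le r+1$ yields $t-1\le r$, and since the corollary also assumes $r<k$, we have $t-1\le r<k$, so $\min\{t-1,k\}=t-1\le r$. Thus every node (information or parity) has locality at most $r$, establishing all-symbol locality $r$.

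For the optimality claim under the stronger hypothesis $t<r+1$, I would start from Theorem \ref{thm:singleton}, which gives
\[
d\le n-\ceiling{k/r}-k+2 = n-k/r-k+2 = t+2
\]
since $r\mid k$. To rule out the value $t+2$, I would verify the hypotheses of Theorem \ref{thm:structure}: the conditions $0<r<k$ and $r\mid k$ are given, and the divisibility condition follows from the identity
\[
n = k+\frac{k}{r}+t = \frac{k}{r}(r+1)+t,
\]
which yields $n\equiv t\pmod{r+1}$. Since $1\le t\le r$, we have $0<t<r+1$, so $r+1\nmid n$. Hence Theorem \ref{thm:structure} precludes the existence of any $[n,k,t+2]$ linear code with all-symbol locality $r$, and the code produced above with distance $t+1$ is optimal.

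I do not anticipate a genuine obstacle here; the argument is essentially a chain of substitutions. The one subtlety worth flagging is the tacit assumption $t\ge 1$ in the optimality half: if $t=0$ then $r+1\mid n$, Theorem \ref{thm:structure} does not apply, and the construction gives a trivial distance-one code, so the optimality statement is understood in the regime $1\le t\le r$ (which is the interesting case anyway, since $t=0$ means no global parity symbols are present).
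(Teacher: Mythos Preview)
Your proposal is correct and follows exactly the paper's own route: the paper derives the corollary in a single sentence by observing that $t\le r+1$ forces the $t$ parity nodes from Theorem~\ref{thm:main1} to have locality $t-1\le r$, and that $t<r+1$ gives $r+1\nmid n$ so Theorem~\ref{thm:structure} applies. Your write-up simply unpacks these steps, and your flagging of the $t=0$ edge case is a worthwhile caveat that the paper leaves implicit.
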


Our second construction provides a family of linear codes with addition based repair, 
all-symbol locality $r$ and attains the distance in \eqref{eq:singleton}.
\begin{thm}[Construction II]\label{thm:main2}
Fix $n<q$, $0< r<k\le n$ and $r|k$. 
Suppose that $n|r+1$, $r+1|q-1$, and $t=n-k-k/r$.
Then there exists an $[n,k,t+2]$ linear code over $\FF_q$ with addition based repair
and all-symbol locality $r$.
By \eqref{eq:singleton}, the code is optimal in terms of distance.
\end{thm}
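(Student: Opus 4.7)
The plan is to realize $\C$ as an evaluation code on a union of cosets of a multiplicative subgroup of $\FF_q^*$, where a carefully chosen coefficient-vanishing pattern both enforces the addition-based repair structure and, crucially, creates one ``free'' root outside the evaluation set that boosts the minimum distance by one. Since $r+1\mid q-1$, I take $H\le \FF_q^*$ to be the subgroup of order $r+1$, which satisfies $\sum_{h\in H}h=0$; and since $n<q$ and $r+1\mid n$, I pick $n'=n/(r+1)$ cosets $C_1,\dots,C_{n'}$ of $H$ in $\FF_q^*$ and set $A=\bigcup_j C_j$, a set of $n$ distinct nonzero points naturally partitioned into $(r+1)$-element groups each summing to zero.

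Writing $m=k/r$, I define $\C$ to be the image, under $f\mapsto(f(\alpha))_{\alpha\in A}$, of the subspace $V\subseteq \FF_q[x]$ of polynomials of degree at most $k+m-1$ whose coefficients at the indices $0,\,r+1,\,2(r+1),\dots,(m-1)(r+1)$ all vanish. Using the identity $\sum_{h\in H}h^j=r+1$ if $(r+1)\mid j$ and $0$ otherwise, the expansion
\[
\sum_{\alpha\in aH}f(\alpha)=\sum_{j\ge 0}f_j\,a^j\sum_{h\in H}h^j
\]
is zero for every $f\in V$ and every coset $aH$, which immediately yields all-symbol locality $r$ with repair sets $C_j\setminus\{\alpha\}$ and the addition-based recovery rule. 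The dimension count is also immediate: $V$ has $(k+m)-m=k$ free coefficients, and the evaluation map is injective on $V$ because the degree $k+m-1$ is strictly less than $n$, so $\dim\C=k$.

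The main step, and essentially the only subtle one, is the distance bound. The key observation is that one of the $m$ coefficients forced to vanish is $f_0$, so every nonzero $f\in V$ has $x=0$ as a root, yet $0\notin A$. Hence a nonzero $f\in V$ has at most $\deg f-1\le k+m-2$ roots inside $A$, which forces the weight of $(f(\alpha))_{\alpha\in A}$ to be at least $n-(k+m-2)=t+2$. Combined with Theorem~\ref{thm:singleton}, which gives $d\le t+2$, this yields $d=t+2$ and optimality. The difficulty here is genuinely in spotting the trick: without the observation that ``$f_0=0$ together with $0\notin A$'' buys one free root off the evaluation set, one only recovers the weaker distance $t+1$ of Theorem~\ref{thm:main1}.
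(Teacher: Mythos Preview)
Your argument is correct. The coset-sum identity $\sum_{h\in H}h^j=r+1\cdot\mathbf{1}[(r+1)\mid j]$ (valid here because $r+1\mid q-1$ forces $\gcd(r+1,q)=1$, so $H$ really has $r+1$ distinct elements and $r+1\ne 0$ in $\FF_q$) kills exactly the coefficients you zeroed out, giving addition-based repair on each coset; the dimension count is clean since the only multiples of $r+1$ in $[0,k+m-1]$ are $0,r+1,\dots,(m-1)(r+1)$; and the distance step is the right trick---$f_0=0$ manufactures a root at $0\notin A$, so a nonzero $f\in V$ has at most $\deg f-1\le k+m-2$ zeros inside $A$, forcing weight $\ge t+2$.

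Your route, however, is genuinely different from the paper's. The paper works on the \emph{parity-check} side: it writes down an $(n-k)\times n$ matrix $\vH$ whose first $m+\ell$ rows are the coset-indicator vectors (giving addition-based repair directly) and whose remaining $\ell r$ rows are evaluations of $x^j$ for $1\le j\le t-1$ with $(r+1)\nmid j$. A short lemma shows the indicator rows span the same space as the evaluations of $x^{j(r+1)}$, $0\le j\le m+\ell-1$; this both verifies $\rank\vH=n-k$ and shows that the row space of $\vH$ contains the row space of the Vandermonde matrix $\vH'$ built from $x^0,\dots,x^t$. Hence $\C$ sits inside the code with parity-check matrix $\vH'$, which is the dual of an $[n,t+1]$ Reed--Solomon code and therefore MDS of distance $t+2$.

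So the paper obtains $d\ge t+2$ via MDS duality of Reed--Solomon codes, while you obtain it via an elementary root-count after spotting the free root at $0$. Your approach is shorter and more self-contained; the paper's approach makes the parity-check structure (and hence the addition-based repair) explicit from the outset and avoids having to verify the coset-sum vanishing by a separate computation. It is also worth noting that your message space is exactly $x\cdot V_{\mathrm{TB}}$, where $V_{\mathrm{TB}}$ is the Tamo--Barg message space for $g(x)=x^{r+1}$; the coordinatewise scaling by the (nonzero) evaluation points is precisely what converts Tamo--Barg's Lagrange-type local repair into the pure addition rule, which dovetails nicely with the paper's Remark comparing Construction~II to Tamo--Barg codes.
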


\section{Code Constructions}
\label{sec:proof}

For the rest of the paper, we fix $n<q$, $0< r<k\le n$, $r|k$ and let $t=n-k-k/r$ and $m=k/r$.
In this section, we provide two constructions of codes with addition based repair.
\begin{enumerate}[(I)]
\item The first code construction simply requires $n<q$. Codes from this family have information locality $r$ and has distance $t+1$.
\item Codes in the second family have $r+1|n$ and we require $r+1|q-1$.
These codes have all-symbol locality $r$ and distance $t+2$.
\end{enumerate}

\subsection{Construction I}

To construct codes for Theorem \ref{thm:main1},
 we produce a $k\times n$ generator matrix $\vG$ of the form

{
\[
\left(\begin{array}{C{5mm}|C{5mm}|C{3mm}|C{5mm}|cccc}
&&&&																g_0^{(1)} & g_1^{(1)} & \cdots & g_{t-1}^{(1)}\\
\Big{\vM} & \Big{\vzero} & \makebox(0,-10){\ldots} & \Big{\vzero} &				g_0^{(2)} & g_1^{(2)} & \cdots & g_{t-1}^{(2)}\\
&& &&																\vdots & \vdots & \ddots & \vdots \\
&&&&									 							g_0^{(r)} & g_1^{(r)} & \cdots & g_{t-1}^{(r)}\\ \hline
&&&&																g_0^{(r+1)} & g_1^{(r+1)} & \cdots & g_{t-1}^{(r+1)}\\
\Big{\vzero} & \Big{\vM} & \makebox(0,-10){\ldots} & \Big{\vzero} &				g_0^{(r+2)} & g_1^{(r+2)} & \cdots & g_{t-1}^{(r+2)}\\
&& &&																\vdots & \vdots & \ddots & \vdots \\
&&&&									 							g_0^{(2r)} & g_1^{(2r)} & \cdots & g_{t-1}^{(2r)}\\ \hline
$\vdots$&$\vdots$&$\vdots$ &$\vdots$ &	\vdots & \vdots & \vdots & \vdots \\ \hline
&&&&																g_0^{(k-r+1)} & g_1^{(k-r+1)} & \cdots & g_{t-1}^{(k-r+1)}\\
\Big{\vzero} & \Big{\vzero} & \makebox(0,-10){\ldots} & \Big{\vM} &				g_0^{(k-r+2)} & g_1^{(k-r+2)} & \cdots & g_{t-1}^{(k-r+2)}\\
&& &&																\vdots & \vdots & \ddots & \vdots \\
&&&&									 							g_0^{(k)} & g_1^{(k)} & \cdots & g_{t-1}^{(k)}\\
\end{array}\right),
\]
}

\noindent where $\vM$ is the $r\times (r+1)$ matrix $(\vI_r|-\vj_r)$.
Here, $\vI_\ell$ and $\vj_\ell$ denote the identity matrix of dimension $\ell$
and the all-ones vector of length $\ell$, respectively.

Clearly, $\vG$ has full rank and hence, it remains to find the elements $g^{(i)}_j$ for $i\in[k]$ and $0\le j\le t-1$
such that $\vG$ generates a linear code $\C$ of distance $t+1$ with addition based repair.
To this end, we demonstrate that $\C$ is a punctured subcode of a cyclic code of length $q-1$ 
with the appropriate distance. 

Specifically, we identify a vector $(c_0,c_1,\ldots, c_{q-2})$ of length $q-1$
with the polynomial $c(x)=\sum_{j=0}^{q-2} c_jx^j$ in the ring $\FF_q^{q-1}/\langle x^{q-1}-1\rangle$.
Under this mapping of vectors to polynomials, 
let the rows of $\vG$ be mapped to polynomials $c^{(1)}(x)$, $c^{(2)}(x)$,\ldots, $c^{(k)}(x)$,
and so, the code $\C$ is identified with the vector space generated by $\{c^{i}(x): i\in[k]\}$.
Note that each codeword of $\C$ has length $n\le q-1$. 
However, we can regard them as vectors of length $q - 1$ by appending $q - 1 - n$ zeros to the right of each codeword
and this operation does not affect the polynomial representation of the codewords.
The next lemma provides sufficient conditions for $c^{(i)}(x)$ to generate 
a code with addition based repair and distance $t+1$.

\begin{lem}\label{lem:main}
Let $\omega$ be the primitive element in $\FF_q$.
Suppose that $\C$ is generated by $\{c^{(i)}(x): i\in[k]\}$ as defined above.
If $c^{(i)}(x)$ has roots at $1,\omega,\ldots,\omega^{t-1}$ for all $i\in[k]$,
then $\C$ has addition based repair and distance $t+1$.
\end{lem}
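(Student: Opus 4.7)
The lemma makes two separate claims---that $\C$ has minimum distance at least $t+1$, and that every node of $\C$ admits an addition-based repair---and the hypothesis on the roots splits cleanly between them: the single condition $c^{(i)}(1)=0$ drives the repair argument, while the full set of roots $1,\omega,\ldots,\omega^{t-1}$ drives the distance bound.

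For the distance, I would invoke the BCH bound on the ambient ring $\FF_q[x]/\langle x^{q-1}-1\rangle$. Since each generator $c^{(i)}(x)$ vanishes at the $t$ consecutive powers $1,\omega,\ldots,\omega^{t-1}$, every codeword polynomial $\sum_i \alpha_i c^{(i)}(x)$ vanishes at the same set. By the BCH bound, any nonzero polynomial in this ring with $t$ consecutive roots has Hamming weight at least $t+1$, and zero-padding a length-$n$ codeword out to length $q-1$ preserves Hamming weight, so the minimum distance of $\C$ viewed as a length-$n$ code is at least $t+1$.

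For addition-based repair I would partition the $n$ nodes into three classes dictated by the block structure of $\vG$. Within each block, the form $\vM=(\vI_r|-\vj_r)$ forces the $r+1$ coordinates of any codeword restricted to that block to sum to zero; hence each of the $r$ information nodes in the block, as well as the block-parity node, equals the negation of the sum of the remaining $r$ coordinates of the block, giving addition-based repair with locality $r$ for all $k+m$ in-block nodes. For the last $t$ parity nodes I would use only the root $x=1$: the identity $c^{(i)}(1)=0$ says that the entries of each row of $\vG$ sum to zero, but the first $k+m$ entries of each row already sum to zero by the block structure, so the last $t$ entries of every row of $\vG$ must also sum to zero. Consequently the last $t$ coordinates of every codeword of $\C$ sum to zero, and each of these parity nodes is recovered as the negation of the sum of the other $t-1$.

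The one place some care is required---and the closest thing to an obstacle---is the step that embeds the length-$n$ code into the length-$(q-1)$ cyclic ambient ring: we use the assumption $n<q$ to make the zero-padding well defined, and we rely on the fact that Hamming weight is preserved under zero-padding so that the BCH lower bound on the ambient ring translates back to a lower bound on the distance of $\C$ itself.
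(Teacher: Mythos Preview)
Your proposal is correct and follows essentially the same approach as the paper: the paper likewise invokes the BCH bound (phrased as $\C$ being a subcode of the cyclic code with generator $(x-1)\cdots(x-\omega^{t-1})$) for the distance, and for repair it partitions the coordinates into the $m$ blocks of size $r+1$ plus the last $t$ parity nodes, using the block structure $\vM=(\vI_r|-\vj_r)$ for the former and the evaluation $c^{(i)}(1)=0$ for the latter. Your observation that the first $k+m$ row entries already sum to zero so the last $t$ must as well is exactly the computation $c^{(i)}(1)=1-1+\sum_j g_j^{(i)}=0$ the paper carries out.
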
 

To prove this lemma, we employ the  BCH bound on minimum distance of a cyclic code. 
We first recall some notation from ~\cite[Ch. 7]{MW_S}.
Let $\C$ be a cyclic code of length $q-1$ over $\FF_q$. 
An element $\alpha \in \FF_q$ is called a \emph{zero} of $\C$
if $c(\alpha) = 0$ for every codeword $c(x) \in \C$. 
Let $Z$ be the set of all zeros of $\C$. 
The polynomial $g(x) \triangleq \prod_{\alpha \in Z} (x-\alpha)$ is called the \emph{generator polynomial} of $\C$. 
Then $c(x) \in \C$ if and only if $g(x) | c(x)$. 

\begin{thm}[BCH bound] Let $\omega$ be the primitive element of $\FF_q$ and $t$ be an integer.
The cyclic code with the generator polynomial $(x-1)(x-\omega)\cdots(x-\omega^{t-1})$
has minimum distance at least $t+1$.
\end{thm}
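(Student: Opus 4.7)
The plan is to prove the BCH bound by the classical Vandermonde determinant argument: if a cyclic codeword has $t$ consecutive powers of $\omega$ as roots, its weight cannot be as small as $t$ without forcing all coefficients to vanish.

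First I would assume for contradiction that there is a nonzero codeword $c(x) = \sum_{j=1}^{w} c_{i_j} x^{i_j}$ with $c_{i_j} \neq 0$ and support size $w \leq t$, where $0 \leq i_1 < i_2 < \cdots < i_w \leq q-2$. Since $(x-1)(x-\omega)\cdots(x-\omega^{t-1})$ divides $c(x)$, we have $c(\omega^\ell) = 0$ for $\ell = 0, 1, \ldots, t-1$, which yields the linear equations
\[
\sum_{j=1}^{w} c_{i_j} (\omega^{i_j})^\ell = 0, \qquad \ell = 0, 1, \ldots, t-1.
\]

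Next I would retain only the first $w$ of these equations and rewrite the system as $V \vc' = \vzero$, where $V$ is the $w \times w$ matrix with entries $V_{\ell j} = (\omega^{i_j})^\ell$ and $\vc' = (c_{i_1}, \ldots, c_{i_w})^T$. This $V$ is a Vandermonde matrix whose nodes are the powers $\omega^{i_1}, \ldots, \omega^{i_w}$. Because $\omega$ is primitive in $\FF_q$ and the exponents $i_j$ are distinct integers in $\{0,1,\ldots,q-2\}$, these nodes are pairwise distinct, hence $\det V = \prod_{j<j'}(\omega^{i_{j'}} - \omega^{i_j}) \neq 0$.

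Therefore $V$ is invertible, forcing $c_{i_j} = 0$ for every $j$ and contradicting the assumption that $c(x)$ is nonzero. Consequently no nonzero codeword can have weight at most $t$, so the minimum distance is at least $t+1$. The argument is essentially routine; the only point requiring care is ensuring that the chosen nodes $\omega^{i_j}$ are distinct, which is exactly where the hypothesis that $\omega$ is primitive and the code length equals $q-1$ enters.
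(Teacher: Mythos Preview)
Your argument is the standard Vandermonde proof of the BCH bound and is correct as written. The paper itself does not supply a proof of this theorem; it is simply quoted as a known result from \cite[Ch.~7]{MW_S} and then invoked in the proof of Lemma~\ref{lem:main}. So there is nothing to compare against here beyond noting that what you wrote is exactly the classical textbook proof.
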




\begin{proof}[Proof of Lemma \ref{lem:main}]
For the distance properties, we show that $\C$ is a subcode
of a cyclic code with generator polynomial $g(x)=(x-1)(x-\omega)\cdots(x-\omega^{t-1})$.
Then the distance of $\C$ is immediate from the BCH bound.

Indeed, for $i\in[k]$, since $c^{(i)}(x)$ has roots at $1,\omega,\ldots,\omega^{t-1}$,
$g(x)$ divides $c^{(i)}(x)$. Hence, $g(x)$ divides all polynomials in the span 
of $\{c^{(i)}(x): i\in[k]\}$, or, $\C$, and, the claim follows.

To demonstrate that $\C$ has addition based repair, 
we partition the first $n-t$ coordinates into $m$ groups of $r+1$ nodes
and place the last $t$ coordinates into one group.
Therefore, it suffices to show that the sum of the symbols in each group is zero
for a typical codeword in $\C$.
Equivalently, we verify this fact for all rows in the generator matrix $\vG$.

Now, the claim is clear for the first $m$ groups from the description of $\vG$.
For the last group of $t$ symbols, consider $i\in[k]$. 
Then $c^{(i)}(x)=x^{t_1}-x^{t_2}+x^{n-t}\sum_{j=0}^{t-1}g_j^{(i)}x^j$
for some $0\le t_1<t_2\le n-t-1$. 
Since $c^{(i)}(1)=0$, we have $\sum_{j=0}^{t-1}g_j^{(i)}=0$, as desired. 
\end{proof}

\vspace{1mm}

Therefore, to complete the proof of Theorem \ref{thm:main1},
we show that $g^{(i)}_j$ exist for $i\in[k]$ and $0\le j\le t-1$
and they can be found by solving $k$ linear systems,
each involving $t$ equations in $t$ variables.
Specifically, fix $i\in[k]$ and let $c^{(i)}(x)=x^{t_1}-x^{t_2}+x^{n-t}\sum_{j=0}^{t-1}g_j^{(i)}x^j$ for some $0\le t_1<t_2\le n-t-1$. 
Then $c^{(i)}(x)$ having roots at $1,\omega,\ldots,\omega^{t-1}$ is equivalent to the following linear system:
{
\begin{equation*}
\left(\begin{array}{cccc}
1 & 1 & \cdots & 1\\
\omega^{n-t} & \omega^{n-t+1} & \cdots & \omega^{n-1}\\
\vdots & \vdots & \ddots & \vdots \\
\omega^{(t-1)(n-t)} & \omega^{(t-1)(n-t+1)} & \cdots & \omega^{(t-1)(n-1)}\\
\end{array}\right) \vg
=\vw,
\end{equation*}
}
where $\vg=(g^{(i)})_{j=0}^{t-1}$ and $\vw=(\omega^{t_1j}-\omega^{t_2j})_{j=0}^{t-1}$.

Observe that the matrix representing the system of equations is Vandermonde.
Hence, we have a unique solution for $g_j^{(i)}$ and 
this completes the proof of Theorem \ref{thm:main1}.

\begin{exa}
Let  $n=11$, $q=13$, $r=3$ and $k=6$.
To compute the matrix $\vG$, we solve
six linear systems,
each involving three equations in three unknowns. Hence,
\[\vG=
\left(\begin{array}{cccc|cccc|ccc}
1 & 0 & 0 & 12 & 0 & 0 & 0 & 0 & 9 & 10 & 7 \\
0 & 1 & 0 & 12 & 0 & 0 & 0 & 0 & 12 & 4 & 10 \\
0 & 0 & 1 & 12 & 0 & 0 & 0 & 0 & 10 & 4 & 12 \\
0 & 0 & 0 & 0 & 1 & 0 & 0 & 12 & 11 & 2 & 0 \\
0 & 0 & 0 & 0 & 0 & 1 & 0 & 12 & 9 & 7 & 10 \\
0 & 0 & 0 & 0 & 0 & 0 & 1 & 12 & 11 & 8 & 7
\end{array}\right),
\]  
\noindent and $\vG$ generates an  $[11,6,4]$ linear code $\C$ 
with addition based repair. 
Nodes 1 to 8 have locality three, while 
nodes 9, 10, 11 have locality two.
Hence, $\C$ has all-symbol locality three. 
Since no $[11,6,5]$ linear code exists with all-symbol locality three by Theorem \ref{thm:structure},
we conclude that $\C$ is optimal. \hfill\qed
\end{exa}

\begin{rem} \label{rem:tamo}
We consider alternative approaches to the problem and point out certain pitfalls.
Suppose there exists an $[k+t-1,k,t]$ MDS code.
One common approach in the literature is to partition the $k+t-1$ nodes into
$m+1$ groups ($m$ groups of size $r$ and one group of size $t-1$),
and re-encode each group with either a $[r+1,r,2]$ or $[t,t-1,2]$ MDS code.
Equivalently, we add another $m+1$ nodes to obtain a code of length $n$ 
with the desired locality properties. 
The question is then whether the resulting code has addition based repair
and the desired distance.

\begin{enumerate}[(i)]
\item One obvious way to guarantee addition based repair is to
re-encode each group with MDS codes that have 
$(\vI_r|- \vj_r)$ and $(\vI_{t-1}|- \vj_{t-1})$ as their generator matrices.

Unfortunately, this method does not ensure that the resulting code is of distance $t+1$.
Consider the following generator matrix of a $[7,4,4]$ MDS code over $\FF_7$,
\[
\left(\begin{array}{cc|cc|ccc}
1 & 0 & 0 & 0 & 1 & 1 & 4 \\
0 & 1 & 0 & 0 & 1 & 2 & 3 \\
0 & 0 & 1 & 0 & 2 & 1 & 3 \\
0 & 0 & 0 & 1 & 2 & 6 & 5
\end{array}\right).\]
Adding three nodes as above yields the generator matrix,
\[
\left(\begin{array}{ccc|ccc|cccc}
1 & 0 & 6 & 0 & 0 & 0 &  1 & 1 & 4 & 1 \\
0 & 1 & 6 & 0 & 0 & 0 & 1 & 2 & 3 & 1\\
0 & 0 & 0 & 1 & 0 & 6 &  2 & 1 & 3 & 1\\
0 & 0 & 0 & 0 & 1 & 6 &  2 & 6 & 5 & 1
\end{array}\right).\]
However, the distance of the new code remains as four 
(distance between the first two rows is four).

\item When $t=r+1$, a more elaborate re-encoding was proposed by Tamo \etal{} \cite{Tamo.etal:2013}.
The initial MDS code in their method is chosen to be a Reed Solomon code defined over $\FF_q$,
while the groups are re-encoded using carefully chosen MDS codes defined over an extension field $\FF_{q^{k+1}}$.
Tamo \etal{} then demonstrated that the distance of the new code is increased to $t+2$, 
thus achieving the bound \eqref{eq:singleton}.

However, the repair algorithm for the code involves multiplications over a finite field.
This complexity is further increased by the fact that the underlying field is of the order $q^{k+1}$.
In general, multiplication over a binary extension field of order $2^s$ involves $O(s^{\log_2 3})$ bit operations
(see for example, \cite{Hankerson.etal:2004}). 
Thus, in the scheme of Tamo \etal{}, when the code is defined over $\FF_{q^{k+1}}$,
 the number of bit operations involved in the repair is increased by a factor of $(k+1)^{\log_2 3}$.
\end{enumerate}
%
\end{rem}

\begin{rem}
Recall below a code construction from Gopalan \etal{}~\cite{Gopalan.etal:2012}
with similar parameters as the codes in Corollary~\ref{cor:main}. 
\begin{quote}
\vspace{-3mm}
\begin{thm}[{\cite[Th. 15]{Gopalan.etal:2012}}]
\label{thm:Gopalan_15}
Let $0 < r < k \le n$ and $t = n - k - k/r < r + 1$. 
Suppose that $q > kn^k$. There exists a systematic $[n,k,t+2]_q$ code
$\C$ of information locality $r$, where $k/r$ parity symbols have locality $r$, and 
$t$ remaining parity symbols have locality $k-(k/r-1)(t-1)$. 
\end{thm}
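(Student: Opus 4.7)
The plan is to follow a Pyramid-code style construction combined with a carefully chosen MDS base code. I would begin with a systematic $[k+t+1,k,t+2]_q$ MDS code $\C_0$, say a shortened Reed--Solomon code, with information symbols $x_1,\ldots,x_k$ and parities $p_0,p_1,\ldots,p_t$. By a change of basis on the parity side (always available since $\C_0$ is MDS), arrange $p_0=x_1+\cdots+x_k$. Partition $[k]$ into $m=k/r$ groups $G_1,\ldots,G_m$ of size $r$ and replace $p_0$ by the $m$ local parities $p_0^{(j)}=\sum_{i\in G_j}x_i$, obtaining a code $\C$ of length $k+m+t=n$.

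The locality of the information and local parity symbols is then immediate: every $x_i\in G_j$ equals $p_0^{(j)}$ minus the other $r-1$ members of its group, giving information locality $r$, and each $p_0^{(j)}$ depends only on the $r$ information symbols of $G_j$. For the distance $d\ge t+2$ I would project any codeword of $\C$ back into $\C_0$ by summing the $m$ local parities to recover $p_0$; a nonzero projection has weight at least $t+2$ in $\C_0$, and collapsing the $m$ coordinates carrying $p_0^{(j)}$ to the single coordinate carrying $p_0$ can only reduce the weight when that sum is zero, which is impossible when the projection is nonzero. A zero projection in turn forces every $x_i$ and every parity to vanish, so the original codeword is zero.

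The main obstacle is verifying heavy-parity locality $k-(m-1)(t-1)$. The idea is to repair each $p_i$ by a linear combination of the other $t-1$ heavy parities together with $m-1$ of the local parities, with the coefficients tuned so that the aggregated linear functional agrees with $p_i$ on all information coordinates save a residual set of size $k-(m-1)(t-1)$, which then forms the remainder of the repair set. This reduces the claim to a system of algebraic identities on the $t\times k$ coefficient matrix of the heavy parities in $\C_0$, one family of identities per pair (heavy parity, distinguished group). The hard part, and the reason for the generous bound $q>kn^k$, is existence: one enumerates the MDS-failure conditions together with the cancellation-failure conditions as polynomial equations in the $O(kt)$ free entries of $\C_0$'s parity coefficients, bounds the number of bad conditions by an $n^k$-type count, and invokes the large field to pick, via the Schwartz--Zippel/union-bound paradigm, a single $\C_0$ simultaneously meeting all the required identities.
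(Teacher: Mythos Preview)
This theorem is not proved in the present paper. It is quoted verbatim from \cite[Th.~15]{Gopalan.etal:2012} inside a remark, solely so that its parameters can be tabulated against those of the code in Corollary~\ref{cor:main}. There is therefore no ``paper's own proof'' to compare your proposal against; any genuine comparison would have to be made with the original argument in \cite{Gopalan.etal:2012}.

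On the proposal itself: the Pyramid-style splitting of a global parity $p_0=x_1+\cdots+x_k$ into $m=k/r$ local parities is indeed the skeleton of the Gopalan~\etal{} construction, and your information-locality and distance arguments are essentially correct (the distance paragraph is worded backwards, but the intended point is simply that summing the $m$ local-parity coordinates to recover $p_0$ is a linear map that cannot increase Hamming weight, so every nonzero word of $\C$ has weight at least that of its image in $\C_0$, namely $t+2$).

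There is, however, a concrete bookkeeping inconsistency in your heavy-parity locality step. You describe the repair set for $p_i$ as the other $t-1$ heavy parities, plus $m-1$ local parities, \emph{plus} a residual information set of size $k-(m-1)(t-1)$. That totals
\[
(t-1)+(m-1)+\bigl(k-(m-1)(t-1)\bigr),
\]
which strictly exceeds the claimed locality $k-(m-1)(t-1)$ whenever $t>1$ or $m>1$. For the arithmetic to close, the residual information set must have size
\[
k-(m-1)(t-1)-(t-1)-(m-1)=k-mt+1,
\]
i.e.\ your $(t-1)+(m-1)$ tunable coefficients must be made to cancel $mt-1$ information coordinates of $p_i$. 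That is the identity you actually need to realise via the large-field/Schwartz--Zippel argument, and it is not obvious that $t+m-2$ free scalars suffice to force $mt-1$ cancellations without additional structure on the base MDS code; this is precisely the nontrivial content of the original Theorem~15 that your sketch leaves unaddressed.
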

\end{quote}

We compare our codes constructed in Corollary~\ref{cor:main} and 
the codes in Theorem~\ref{thm:Gopalan_15} in Table~\ref{tab:Gopalan_ours}.
Observe that our codes besides having addition based repair, also have smaller locality values and 
are defined over a significantly smaller field. However, 
the codes by Gopalan \etal{} have distance one more than our codes.

\begin{table}[H]%
\centering
\begin{tabular}{|l|c|c|}
\hline
&  Our code &  Gopalan \etal{} code\\
\hline
Addition based repair & Yes & No\\
\hline
Locality of $k + k/r$ nodes & $r$ & $r$\\
\hline
Locality of $t$ nodes & $t-1$ & $t-1 + (r+1-t)k/r$\\
\hline
Field size & $\mathcal{O}(n)$ & $\mathcal{O}(kn^k)$\\
\hline
Minimum distance & $t + 1$ & $t+2$\\
\hline
\end{tabular}
\vspace{2mm}

\caption{Comparison of Code from Corollary~\ref{cor:main} and Code from Theorem~\ref{thm:Gopalan_15} (by Gopalan \etal{}). 
}
\label{tab:Gopalan_ours}

\end{table}
\end{rem}

%

\subsection{Construction II}

We construct codes that satisfy the conditions in Theorem \ref{thm:main2}.
Here, we assume further $r+1|n$ and let $n=(k/r+\ell)(r+1)=(m+\ell)(r+1)$, in other words, $t=\ell r+\ell$.
Also suppose that $r+1|q-1$ and let $\omega$ be a primitive element of $\FF_q$.
Define $\alpha\triangleq \omega^{(q-1)/(r+1)}$,
and so, $\alpha$ has order $r+1$.

 In lieu of a generator matrix, we construct a {\em parity-check matrix}  for our code.
 Observe that $n-k=m+\ell+\ell r$.
 Hence, we consider the $(m+\ell+\ell r)\times n$ matrix $\vH$ of the form

{
\[
\vH\triangleq
\left(\begin{array}{cccc|cccc|c|cccc}
  1 & 1 & \cdots & 1 &
 0 & 0 & \cdots & 0 & \cdots &
 0 & 0 & \cdots & 0 \\
 0 & 0 & \cdots & 0 &
  1 & 1 & \cdots & 1 & \cdots &
 0 & 0 & \cdots & 0 \\
 \vdots & \vdots &\ddots & \vdots &
 \vdots & \vdots &\ddots & \vdots & \ddots &
 \vdots & \vdots &\ddots & \vdots \\
 0 & 0 & \cdots & 0 &
 0 & 0 & \cdots & 0 & \cdots &
   1 & 1 & \cdots & 1 \\ \hline
   
\multicolumn{4}{c|}{\vv_0^{(1)}}&
\multicolumn{4}{c|}{\vv_1^{(1)}}&
 \cdots &
\multicolumn{4}{c}{\vv_{m+\ell-1}^{(1)}}\\
\multicolumn{4}{c|}{\vv_0^{(2)}}&
\multicolumn{4}{c|}{\vv_1^{(2)}}&
 \cdots &
\multicolumn{4}{c}{\vv_{m+\ell-1}^{(2)}}\\
\multicolumn{4}{c|}{\vdots}&
\multicolumn{4}{c|}{\vdots}&
 \cdots &
\multicolumn{4}{c}{\vdots}\\
\multicolumn{4}{c|}{\vv_0^{(r)}}&
\multicolumn{4}{c|}{\vv_1^{(r)}}&
 \cdots &
\multicolumn{4}{c}{\vv_{m+\ell-1}^{(r)}}\\\hline

\multicolumn{4}{c|}{\vv_0^{(r+2)}}&
\multicolumn{4}{c|}{\vv_1^{(r+2)}}&
 \cdots &
\multicolumn{4}{c}{\vv_{m+\ell-1}^{(r+2)}}\\
\multicolumn{4}{c|}{\vv_0^{(r+3)}}&
\multicolumn{4}{c|}{\vv_1^{(r+3)}}&
 \cdots &
\multicolumn{4}{c}{\vv_{m+\ell-1}^{(r+3)}}\\
\multicolumn{4}{c|}{\vdots}&
\multicolumn{4}{c|}{\vdots}&
 \cdots &
\multicolumn{4}{c}{\vdots}\\
\multicolumn{4}{c|}{\vv_0^{(2r+1)}}&
\multicolumn{4}{c|}{\vv_1^{(2r+1)}}&
 \cdots &
\multicolumn{4}{c}{\vv_{m+\ell-1}^{(2r+1)}}\\\hline

\multicolumn{4}{c|}{\vdots}&
\multicolumn{4}{c|}{\vdots}&
 \ddots &
\multicolumn{4}{c}{\vdots}\\ \hline

\multicolumn{4}{c|}{\vv_0^{(\ell r+\ell-r)}}&
\multicolumn{4}{c|}{\vv_1^{(\ell r+\ell-r)}}&
 \cdots &
\multicolumn{4}{c}{\vv_{m+\ell-1}^{(\ell r+\ell-r)}}\\
\multicolumn{4}{c|}{\vv_0^{(\ell r+\ell-r+1)}}&
\multicolumn{4}{c|}{\vv_1^{(\ell r+\ell-r+1)}}&
 \cdots &
\multicolumn{4}{c}{\vv_{m+\ell-1}^{(\ell r+\ell-r+1)}}\\
\multicolumn{4}{c|}{\vdots}&
\multicolumn{4}{c|}{\vdots}&
 \cdots &
\multicolumn{4}{c}{\vdots}\\
\multicolumn{4}{c|}{\vv_0^{(\ell r+\ell-1)}}&
\multicolumn{4}{c|}{\vv_1^{(\ell r+\ell-1)}}&
 \cdots &
\multicolumn{4}{c}{\vv_{m+\ell-1}^{(\ell r+\ell -1)}}
  \end{array}\right),
\]
}
\noindent where \[\vv_i^{(j)}=((\omega^i)^j, (\omega^i\alpha)^j, \ldots, (\omega^i\alpha^r)^j) \mbox{ for }0\le i\le m+\ell-1, 0\le j\le n-1.\]

Observe that $0\le j\le n-1$, the vector $\left(\vv_i^{(j)}\right)_{i=0}^{m+\ell-1}$ is the evaluation of the polynomial $x^j$ 
with evaluation points $\{\omega^i\alpha^{i_1}: 0\le i\le m+\ell-1, 0\le i_1\le r\}$. Therefore, the set of $n$ vectors
$\vV\triangleq\left\{\left(\vv_i^{(j)}\right)_{i=0}^{m+\ell-1} : 0\le j \le n-1\right\}$ 
is linearly independent.

We first demonstrate the following lemma.

\begin{lem}\label{lem:samespace}
For $0\le i\le m+\ell-1, 0\le j\le n-1$, define $\vv_i^{(j)}$ as above. Then the span of $(m+\ell)$ vectors 
$\left\{\left(\vv_i^{(j(r+1))}\right)_{i=0}^{m+\ell-1} : 0\le j \le m+\ell-1\right\}$
is the same as the span of the $(m+\ell)$ vectors
\[\vJ\triangleq=\{(\vj_{r+1},\vzero,\ldots,\vzero),(\vzero,\vj_{r+1},\ldots,\vzero),\ldots,(\vzero,\vzero,\ldots,\vj_{r+1})\}.\]
\end{lem}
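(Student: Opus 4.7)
The plan is to compute $\vv_i^{(j(r+1))}$ explicitly, observe that it reduces to a scalar multiple of $\vj_{r+1}$, and then invoke a Vandermonde argument to equate the two spans.

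First, I would use the defining identity $\alpha^{r+1}=1$. Since
\[
\vv_i^{(j(r+1))} = \bigl((\omega^i)^{j(r+1)},(\omega^i\alpha)^{j(r+1)},\ldots,(\omega^i\alpha^r)^{j(r+1)}\bigr)
\]
and $\alpha^{s\cdot j(r+1)}=1$ for every $s$, this block simplifies to $\omega^{ij(r+1)}\vj_{r+1}$. Therefore, writing $\beta\triangleq\omega^{r+1}$, the $j$-th candidate vector is
\[
\bigl(\vv_i^{(j(r+1))}\bigr)_{i=0}^{m+\ell-1} = \bigl(\beta^{0\cdot j}\vj_{r+1},\beta^{1\cdot j}\vj_{r+1},\ldots,\beta^{(m+\ell-1)j}\vj_{r+1}\bigr).
\]
In particular, every linear combination $\sum_{j=0}^{m+\ell-1} c_j \bigl(\vv_i^{(j(r+1))}\bigr)_{i}$ is already of the form $(d_0\vj_{r+1},d_1\vj_{r+1},\ldots,d_{m+\ell-1}\vj_{r+1})$, so the first span is contained in the span of $\vJ$.

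Next I would establish the reverse inclusion. Getting a prescribed block-vector $(d_0\vj_{r+1},\ldots,d_{m+\ell-1}\vj_{r+1})$ amounts to solving the linear system $\sum_{j=0}^{m+\ell-1} c_j \beta^{ij}=d_i$ for $i=0,\ldots,m+\ell-1$, whose coefficient matrix is Vandermonde in the nodes $\beta^0,\beta^1,\ldots,\beta^{m+\ell-1}$. Since $r+1\mid q-1$, the element $\beta=\omega^{r+1}$ has multiplicative order exactly $(q-1)/(r+1)$; combined with $n<q$ and $n=(m+\ell)(r+1)$, this yields $m+\ell\le(q-1)/(r+1)$, so the $\beta^i$ are pairwise distinct and the Vandermonde matrix is invertible. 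Hence every element of $\vJ$ (and therefore of its span) is realizable, giving the reverse inclusion.

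The only subtle point is the order argument for $\beta$: one must combine the hypothesis $r+1\mid q-1$ with the running assumption $n<q$ to conclude $m+\ell\le(q-1)/(r+1)$, which guarantees distinctness of the Vandermonde nodes. Everything else is a direct calculation using $\alpha^{r+1}=1$ and standard linear algebra.
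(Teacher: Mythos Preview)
Your proof is correct and follows essentially the same route as the paper: both reduce $\vv_i^{(j(r+1))}$ to $\beta^{ij}\vj_{r+1}$ with $\beta=\omega^{r+1}$ and then appeal to the invertibility of the Vandermonde matrix $(\beta^{ij})_{0\le i,j\le m+\ell-1}$ to conclude the two spans coincide. The paper packages this as a matrix factorization rather than two inclusions, and simply asserts that the Vandermonde matrix is invertible, whereas you supply the order argument for $\beta$ that the paper omits; otherwise the arguments are identical.
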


\begin{proof}
For $0\le j\le m+\ell-1$ and $0\le i\le m+\ell-1$, we observe that
\[\vv_i^{j(r+1)}=((\omega^i)^{j(r+1)}, (\omega^i\alpha)^{j(r+1)}, \ldots, (\omega^i\alpha^r)^{j(r+1)})=(\beta^{ij},\beta^{ij},\ldots,\beta^{ij})=\beta^{ij}\vj_{r+1},\]
where $\beta=\omega^{r+1}$.

 Therefore,
\begin{align*}
&\left(
\begin{array}{cccc}
\vv_0^{(0)} & \vv_1^{(0)} & \cdots &\vv_{m+\ell-1}^{(0)} \\
\vv_0^{(r+1)} & \vv_1^{(r+1)} & \cdots &\vv_{m+\ell-1}^{(r+1)} \\
\vdots & \vdots & \ddots & \vdots\\
\vv_0^{(m+\ell-1)(r+1)} & \vv_1^{(m+\ell-1)(r+1)} & \cdots &\vv_{m+\ell-1}^{(m+\ell-1)(r+1)} 
\end{array}
\right) \\
&= 
\left(
\begin{array}{cccc}
\vj_{r+1} &\vj_{r+1} & \cdots & \vj_{r+1} \\
\vj_{r+1} & \beta\vj_{r+1} & \cdots & \beta^{m+\ell-1}\vj_{r+1} \\
\vdots & \vdots & \ddots & \vdots\\
\vj_{r+1} &\beta^{(m-\ell-1)}\vj_{r+1} & \cdots & \beta^{(m-\ell-1)^2}\vj_{r+1} \\
\end{array}
\right)\\
&
= 
\left(
\begin{array}{cccc}
1&1&\cdots &1 \\
1 & \beta & \cdots & \beta^{m+\ell-1}\\
\vdots & \vdots & \ddots & \vdots\\
1 &\beta^{(m-\ell-1)} & \cdots & \beta^{(m-\ell-1)^2} \\
\end{array}
\right)
\left(
\begin{array}{cccc}
\vj_{r+1} &\vzero& \cdots & \vzero \\
\vzero & \vj_{r+1} & \cdots & \vzero \\
\vdots & \vdots & \ddots & \vdots\\
\vzero& \cdots & \vzero &\vj_{r+1}  \\
\end{array}
\right)
\end{align*}

Since 
$\left(\beta^{ij}\right)_{0\le i,j\le m-\ell-1}$
is invertible, we have the row spaces of the two matrices to be the same, as desired.
\end{proof}

Next, we demonstrate that the $\vH$ has full rank. In other words, the rank of $\vH$ is $m+\ell+\ell r$.
Notice that the row space of $\vH$ is given by the span of 
$\vJ\cup\left\{\left(\vv_i^{(j)}\right)_{i=0}^{m+\ell-1}:  0\le j\le \ell r+\ell-1, r+1\nmid j \right \}$.
From Lemma \ref{lem:samespace}, this space is given by $m+\ell+\ell r$ vectors from $\vV$.
Since these $n$ vectors are linearly independent, we have $\vH$ has full rank as desired.

Therefore, the code $\C$ with parity-check matrix $\vH$ has the desired dimension $k$.
Let us partition the $n$ coordinates into $m+\ell$ groups of size $r+1$.
Then from the first $m+\ell$ rows of $\vH$, we see any codeword in $\C$ has the property that  
the symbols in each group sum to zero. 
Hence, $\C$ has all-symbol locality $r$ with addition based repair.

It then remains to verify the distance of $\C$. 
Let $\vH'$ be the $(\ell + \ell r +1)\times n$ matrix that consists of the rows 
$\left\{\left(\vv_i^{(j)}\right)_{i=0}^{m+\ell-1}:  0\le j\le \ell r+\ell \right\}$.
It follows from Lemma \ref{lem:samespace} that the code generated by $\vH$ contains the code generated by $\vH'$.
So, $\C$ is contained in the code with parity-check matrix $\vH'$.

On the other hand, $\vH'$ is a generator matrix of a Reed-Solomon $[n,\ell +\ell r+1 ,n-\ell -\ell r]$ code.
Since the dual of an MDS code is also MDS,
the code with parity-check matrix $\vH'$ is a $[n,n-\ell -\ell r-1,\ell+\ell r+2]$ code.
Therefore, the code $\C$ has distance $\ell+\ell r+2=t+2$, as desired.

\begin{exa} Let $k=6$, $r=3$, and $n=12$.
Choose $q=13$ so that $r+1|q-1$ and $n|q-1$. Then $\vH$ is given by
\[       	
\left(\begin{array}{cccc|cccc|cccc}
1 & 1 & 1 & 1 & 0 & 0 & 0 & 0 & 0 & 0 & 0 & 0 \\
0 & 0 & 0 & 0 & 1 & 1 & 1 & 1 & 0 & 0 & 0 & 0 \\
0 & 0 & 0 & 0 & 0 & 0 & 0 & 0 & 1 & 1 & 1 & 1 \\
1 & 8 & 12 & 5 & 2 & 3 & 11 & 10 & 4 & 6 & 9 & 7 \\
1 & 12 & 1 & 12 & 4 & 9 & 4 & 9 & 3 & 10 & 3 & 10 \\
1 & 5 & 12 & 8 & 8 & 1 & 5 & 12 & 12 & 8 & 1 & 5
\end{array}\right),\]
\noindent and the linear code whose parity-check matrix is $\vH$
is given in Example \ref{exa:motivating}. 
\end{exa}

\begin{rem}\label{rem:tamo.barg} 
We observe certain similarities with the class of {\em Tamo-Barg} codes, 
proposed by Tamo and Barg \cite{Tamo.Barg:2014}. 
The Tamo-Barg codes may be viewed as a generalization of the Reed-Solomon codes.
As with Reed-Solomon code, each codeword in a Tamo-Barg code is 
the evaluation of a low degree polynomial over a subset of points in $\FF_q$.

Suppose\footnote{The condition of $r+1|n$ may be relaxed (see for example Constructions 5 and 6 in \cite{Tamo.Barg:2014}). 
However, we consider the case of $r+1|n$ for ease of exposition.} 
$r+1|n$. In general, a Tamo-Barg code requires a collection of $n/(r+1)$ disjoint groups $S_0,S_1,\ldots, S_{n/(r+1)-1}$
of evaluation points in $\FF_q$ and a polynomial $g(x)$ of degreee $r+1$ 
such that $g$ is constant on the points in the same group.
Using $g$, Tamo and Barg then choose the space of polynomials and 
evaluate them onto the points in $\bigcup_{i=0}^{n/(r+1)-1}S_i$ to obtain the codewords.
In some sense, Tamo and Barg constructs a {\em generator matrix} using $g$ and $S_s$.

In Construction II, we also have $n/(r+1)=m+\ell$ disjoint groups $S_i=\{\omega^i\alpha^j: 0\le j\le r\}$, $0\le i\le m+\ell-1$, and
we may choose $g(x)=x^{r+1}$. 
However, a fundamental difference in our approach is that 
we exploit $g$ and $S_i$ to construct a {\em parity-check matrix} for our codes.
Furthermore, if we apply Tamo and Barg's construction with the parameters $n=12$, $q=13$, $k=6$, $r=3$, 
we do not obtain a code with addition based repair.
\end{rem}

\section{Comparison with Previous Work}
\label{sec:comparison}

In this section, we compare the family of codes constructed in Theorem \ref{thm:main1} and \ref{thm:main2}
with certain families of optimal codes with information locality $r$.
Here, we focus on codes defined over fields of size $\mathcal{O}(n)$, 
where $n$ is the code length.
As pointed out in Remark \ref{rem:tamo}(ii), when the underlying field is large,
the complexity of finite field arithmetic is appreciably increased. 

On the other hand, when the underlying field is binary,
any linear code is naturally equipped with addition based repair.
Binary linear codes with good locality and distance properties were first studied
by Goparaju and Calderbank \cite{Goparaju.Calderbank:2014} and
their work was later extended by Tamo \etal{} \cite{Tamo.etal:2015} and Zeh and Yaakobi \cite{Zeh.Yaakobi:2015}.
However, optimal codes with all-symbol locality $r$ and distance $d$
are only known for a limited set of parameters 
and other exceptional cases \cite{Goparaju.Calderbank:2014,Zeh.Yaakobi:2015}.

Hence, we compare our construction with the following two classes of optimal codes with information locality $r$:
{\em Pyramid codes}, proposed by Huang \etal{} \cite{Huang.etal:2013}
and {\em Tamo-Barg codes}. 
Table \ref{tab:comparison} summarizes the differences in repair complexities, locality and distance properties,
and the analysis is detailed below.

We observe that the codes from Theorem \ref{thm:main1} and \ref{thm:main2} are most efficient in terms of repair complexities.
However, the codes from Theorem \ref{thm:main1} are slightly weaker in terms of distance.
\vspace{1mm}

\noindent{\bf Pyramid Codes}. Pyramid codes are a family of erasure codes with information locality $r$ that attain the bound \eqref{eq:singleton}
and one variant is currently deployed in 
Windows Azure System \cite{Huang.etal:2012}.
%
%
In general, a Pyramid code is a linear $[n,k,t+2]$ code whose $k$ information nodes
are divided into $m$ groups $S_1, S_2,\ldots, S_m$ of size $r$. 
For each group $S_i$ of $r$ information nodes $x_1^{(i)},x_2^{(i)},\ldots,x_r^{(i)}$, 
we associate a parity node $y$ whose value is given by 
\[y^{(i)}=\sum_{j=1}^r \alpha^{(i)}_j x_j^{(i)} \mbox{ for some nonzero values }\alpha_1,\alpha_2,\ldots,\alpha_r. \] 
 Observe that to recover $x_\ell^{(i)}$, $\ell\in[r]$, we compute
 \[x^{(i)}_\ell=\left(\alpha^{(i)}_\ell\right)^{-1}\left(y-\sum_{j\ne \ell}\alpha^{(i)}_j x^{(i)}_j\right).\]
 In either case, to repair any of these $m(r+1)$ nodes, we require $\mathcal{O}(r)$ finite field operations that include multiplications.
 
 For each of the remaining $t$ parity nodes, say $z_\ell$, $\ell\in[t]$, its value is given by
 \[z_\ell=\sum_{i=1}^{m}\sum_{j=1}^r \beta_j^{(i)} x_j^{(i)} \mbox{ for some $k$ nonzero values }\beta_j^{(i)}. \] 
Hence, the repair of each of these $t$ parity nodes require $\mathcal{O}(k)$ finite field operations that include multiplications.
\vspace{1mm}

\noindent{\bf Tamo-Barg Codes}.
%
 Suppose $r+1|n$. Recall that a Tamo-Barg code requires a collection of $n/(r+1)$ disjoint groups $S_0,S_1,\ldots, S_{n/(r+1)-1}$
of evaluation points in $\FF_q$. 
 Then the repair function for node $\alpha\in S_s$ is given by
 \[ c_\alpha=\sum_{\beta\in S_s\setminus\{\alpha\}} c_\beta \prod_{\beta'\in S_s\setminus\{\alpha,\beta\}} \frac{\alpha-\beta'}{\beta-\beta'}. \]
Therefore, the repair of node $\alpha$ requires $\mathcal{O}(r^2)$ finite field operations that include multiplications.

\begin{table*}
\caption{Comparisons of Codes with Good Locality and Distance Properties 
\vspace{-2mm}
}
\label{tab:comparison}
Codes of length $n$, dimension $k$ and information locality $0< r<k\le n$.
Assume also, $r|k$ and let $t=n-k-k/r$. 
\vspace{1mm}

\begin{tabular}{|l|p{6cm}|p{6.5cm}|p{1.3cm}|}
\hline
& Repair Complexity & Locality & Distance\\
\hline
\hline
Theorem \ref{thm:main1} &
$\mathcal{O}(r)$ additions only for $\frac{k}{r}(r+1)$ nodes &
$\frac{k}{r}(r+1)$ nodes with locality $r$  &
$t+1$\\
&$\mathcal{O}(t)$ additions only for $t$ nodes & $t$ parity nodes with locality $\min\{t-1,k\}$ &\\
\hline
Theorem \ref{thm:main2} &
$\mathcal{O}(r)$ additions only for all nodes &
All-symbols locality $r$ &
$t+2$\\
\hline
Pyramid Codes &
$\mathcal{O}(r)$ field operations for $\frac{k}{r}(r+1)$ nodes &
$\frac{k}{r}(r+1)$ nodes with locality $r$ &
$t+2$\\
&$\mathcal{O}(k)$ field operations for $t$ nodes & $t$ parity nodes with locality $k$ &\\

\hline
Tamo-Barg &
$O(r^2)$ field operations for all nodes &
All-symbols locality $r$ &
$t+2$\\
\hline
\end{tabular}
\vspace{-5mm}
\end{table*}

\section{Concluding Remarks}
\label{sec:conclusion}

We introduced the notion of addition based repair for locally repairable codes
in order to accelerate the repair process for single node failures.
Furthermore, since the repair involves only addition operations,
implementation of these codes at the hardware level is considerably easier.

The theoretical results are encouraging. 
We constructed two families of codes with information locality $r$ and addition based repair, 
whose distances come close to the Singleton-like bound \eqref{eq:singleton}.
In addition, the codes in Construction II bear a resemblance to the Tamo-Barg codes (see Remark \ref{rem:tamo.barg})
and it will be interesting to derive certain connections between the two code families.

Another direction of study is to allow other efficient field operations, 
such as the bit-wise shift operation for binary extension fields, in the repair process.
As in the study of array codes \cite{Blaum.Roth:1993} and regenerating codes \cite{Hou.etal:2013,Shum.etal:2014}, 
the introduction of these operators may be employed to improve the encoding and decoding complexities for codes with local repair.

\bibliographystyle{IEEEtran}
\bibliography{LRC}

\end{document}